\newcommand{\bburl}[1]{\textcolor{blue}{\url{#1}}}
\newcommand{\monthyear}[1]{%
  \def\@monthyear{\uppercase{#1}}}
\newcommand{\volnumber}[1]{%
  \def\@volnumber{\uppercase{#1}}}
\def\ps@plain{\ps@empty
  \def\@oddfoot{\@monthyear \hfil \thepage}%
  \def\@evenfoot{\thepage \hfil \@volnumber}}
\def\ps@firstpage{\ps@plain}
\def\ps@headings{\ps@empty
  \def\@evenhead{%
    \setTrue{runhead}%
    \def\thanks{\protect\thanks@warning}%
    \uppercase{\ }\hfil}%
  \def\@oddhead{%
    \setTrue{runhead}%
    \def\thanks{\protect\thanks@warning}%
    \hfill\uppercase{}}%
  \let\@mkboth\markboth
  \def\@evenfoot{%
    \thepage \hfil \@volnumber}%
  \def\@oddfoot{%
    \@monthyear \hfil \thepage}%
  }%
\theoremstyle{plain}
\numberwithin{equation}{section}
\newtheorem{thm}{Theorem}[section]
\newtheorem{theorem}[thm]{Theorem}
\newtheorem{definition}[thm]{Definition}
\newtheorem{remark}[thm]{Remark}
\newcommand{\ignore}[1]{}
\begin{document}

\monthyear{August 2021}
\volnumber{August 2021}
\setcounter{page}{1}
\title{Nonexistence of a Universal Algorithm for Traveling Salesman Problems in Constructive Mathematics}

\author{Linglong Dai
}

\begin{abstract}
Proposed initially from a practical circumstance, the traveling salesman problem caught the attention of numerous economists, computer scientists, and mathematicians. These theorists were instead intrigued by seeking a systemic way to find the optimal route. Many attempts have been made along the way and all concluded the nonexistence of a general algorithm that determines optimal solution to all traveling salesman problems alike. In this study, we present proof for the nonexistence of such an algorithm for both asymmetric (with oriented roads) and symmetric (with unoriented roads) traveling salesman problems in the setup of constructive mathematics. 
\end{abstract}

\address{\tiny{George School, Newtown, PA 18940}} \email{dail23@georgeschool.org}

\date{\today}
\thanks{This research was supported by the Ivy Mind Summer Program and the project was created as a result of joint collaboration of Viktor Chernov and Vladimir Chernov.}

\maketitle
\ \\ Keywords: Traveling salesman problem, Constructive mathematics.\\ \

\tableofcontents
\newpage
\section{Introduction}
    The traveling salesman problem (TSP) is a problem of major concern in computer science, economics, and mathematics. It abstracts the realistic situation of traveling between cities as a graph. In the graph, cities are denoted as nodes and roads connecting them as edges with each costing the traveler a certain amount of toll fee. For a long time, theorists have worked on finding systemic ways to obtain the optimal route, following which a salesman could travel to all cities non-repeatedly in a complete tour and return to the original location with a minimum cost. Although various algorithms have been proposed and improved upon one another, a universal algorithm that computes the optimal routes in all traveling salesman problems in finite time seems to be non-existing.  
    
    In this paper, we focus on the problem of the existence of such an algorithm. In particular, we consider such a problem in constructive mathematics. The principle of omniscience grants two possible outcomes: given an algorithm, either optimal solutions could be found in all TSPs or optimal solution(s) computable in finite time does not exist for at least one TSP. We present the validity of the latter in the paper. 
    
    Specifically, we formulate our objective in this paper as to prove the following theorem:
    \begin{theorem}\label{theorem}
    There does not exist an everywhere defined computable algorithm $\Hat{H}$ which determines the optimal route in all traveling salesman problems when costs of roads are constructive real numbers.
    \end{theorem}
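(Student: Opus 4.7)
The plan is to establish Theorem \ref{theorem} by contradiction: assume a universal, everywhere-defined computable algorithm $\Hat{H}$ exists, and show that it could be used to decide a problem known to be constructively undecidable, namely the sign of a constructive real number. In constructive mathematics, given a constructive real $x$ (an algorithm producing a Cauchy sequence of rationals with explicit modulus of convergence), there is no algorithm that, in finite time, outputs a bit $b$ consistent with $b=1 \Rightarrow x \ge 0$ and $b=0 \Rightarrow x \le 0$; this is a standard Specker-style consequence of the halting problem, since one encodes two Turing machines $M_+, M_-$ (of which at most one ever halts) into a sequence whose limit is positive, negative, or zero according to which halts, and any such bit algorithm would decide halting.

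Concretely, I would fix a pair $(M_+, M_-)$ and construct the constructive real
\[
x = \lim_{n\to\infty} x_n, \qquad x_n = \begin{cases} 0 & \text{neither machine has halted by step } n,\\ +2^{-k} & M_+ \text{ halts first, at step } k \le n,\\ -2^{-k} & M_- \text{ halts first, at step } k \le n.\end{cases}
\]
I would then build a small TSP instance---three oriented nodes in the asymmetric case, four unoriented nodes in the symmetric case---whose edge costs are rationals plus rational multiples of $x$, rigged so that exactly two Hamiltonian tours $T_1, T_2$ are candidates for the optimum, with costs $c$ and $c+x$, while every other tour is strictly more expensive than both by a fixed positive rational margin (so that no third tour can ever be optimal regardless of $x$). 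The output of $\Hat{H}$ on this instance must be an optimal tour; correctness then forces $\text{output}=T_1 \Rightarrow x \ge 0$ and $\text{output}=T_2 \Rightarrow x \le 0$. Reading off which tour $\Hat{H}$ returns thus solves the sign problem for $x$, contradicting its construction.

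The main obstacle I expect is the tie case: when $x=0$ both $T_1$ and $T_2$ are optimal, and $\Hat{H}$ may return either, so the reduction at first glance yields only a weak, LLPO-style disjunction rather than a strict sign decision. The way around this is to observe that LLPO applied to this particular $x$ already settles which of $M_+, M_-$ halts first (or that neither halts), which is itself undecidable, so the weak conclusion is already enough to derive a contradiction. A secondary technical point is uniformity: the edge weights must be genuine constructive reals with an explicit, machine-computable modulus of convergence, not reals defined by a case analysis that tacitly presupposes omniscience. This is arranged by letting the Cauchy sequence defining each edge weight simulate $M_+$ and $M_-$ step by step and emit its approximation after each step. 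Finally, I would verify that the graph construction works in both settings: three nodes already carry two distinct directed Hamiltonian cycles with disjoint edge multisets for the asymmetric case, while the symmetric case requires four nodes so that two of the three Hamiltonian cycles on $K_4$ can absorb the $\pm x$ perturbations on edges not shared with the dominated third tour. Once these pieces are in place, the contradiction with the assumed totality and computability of $\Hat{H}$ is immediate in both the asymmetric and symmetric regimes.
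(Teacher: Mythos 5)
Your overall architecture coincides with the paper's: a small instance with exactly two candidate tours whose cost gap is a constructive real encoding machine behavior, so that any total solver performs a forbidden comparison. Indeed your three-node asymmetric gadget is exactly the paper's Figure \ref{asyGraph}, and your four-node symmetric gadget can in fact be weighted to work (e.g.\ on $K_4$ with vertex set $\{1,2,3,4\}$, costs $12=34=1$, $23=14=2$, $13=2$, $24=2+x$ give tours of cost $6$, $6+x$, and $8+x$, so the third tour is dominated by a fixed margin for $|x|\leq 1$), which is slightly more economical than the paper's five-node construction with the $100$-cost edges. However, the final step --- refuting the sign decision that $\hat{H}$ would provide --- has a genuine gap. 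First, you ``fix a pair $(M_+,M_-)$'': for any \emph{fixed} pair, one of the two answers is correct outright, so a constant algorithm ``solves the sign problem for $x$'' and no contradiction follows; the impossibility is inherently uniform over a family of instances. Second, both of your stated justifications fail. A bit $b$ with $b=1\Rightarrow x\geq 0$ and $b=0\Rightarrow x\leq 0$ does \emph{not} ``decide halting'' in one shot: when neither machine halts, $x=0$ and either answer is sound, so the bit carries no information about the halting of any single machine (taking $M_-$ to never halt, the constant answer $b=1$ is always correct). Likewise your tie-case fix is misstated: the weak disjunction does not ``settle which of $M_+,M_-$ halts first (or that neither halts)''; the output $T_1$ only excludes ``$M_-$ halts first,'' leaving ``$M_+$ halts first'' and ``neither halts'' indistinguishable, so no decision of an undecidable trichotomy is extracted.

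What is missing is a diagonalization that makes the machines depend on the hypothetical solver. Given a total computable $\hat{H}$, one must construct, uniformly in an index $n$, instances whose defining sequences simulate the solver's own verdict and then defeat it --- formally via the recursion theorem, or, as the paper does, by importing this content wholesale: take a partial computable $H$ with no total computable extension (Theorem \ref{SV}), let $C_n,D_n$ track the behavior of $H$ on input $n$ as in Section \ref{defi}, and observe that reading off $\hat{H}$'s chosen tour defines a \emph{total} function $H'$ agreeing with $H$ wherever $H$ halts. This formulation also dissolves your tie-case worry cleanly: when $H(n)$ is undefined, $C_n=D_n$ and $H'$ may answer arbitrarily while still extending $H$, so no LLPO-style subtlety ever needs to be confronted --- the contradiction is simply the existence of the total extension $H'$. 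With that replacement (or an explicit recursion-theorem construction of a family $(M_+^n,M_-^n)$ in place of your single fixed pair), your reduction goes through in both the asymmetric and symmetric settings, and your four-node symmetric gadget would even sharpen the paper's Remark \ref{remark}.
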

    \noindent Furthermore, in section $4$, we limit the type of TSPs to symmetric TSPs, meaning that traveling back and forth between two cities would have the same cost. And we prove that:
    \begin{remark}\label{remark}
    \ref{theorem} holds true for symmetric TSPs.
    \end{remark}
    
    Note that there are inextendable algorithms defined on $\mathbb{N}$ to $[0,1]$. 
    \begin{theorem}[\textbf{Shen $\&$ Vereshchagin \cite{SV}}]\label{SV}
    There exists a computable function that has no total computable extension.
    \end{theorem}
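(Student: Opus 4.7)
The plan is to carry out a diagonal argument against an effective enumeration of partial computable functions. Starting from a G\"odel numbering $\varphi_0,\varphi_1,\varphi_2,\ldots$ of all partial computable functions $\mathbb{N}\to\mathbb{N}$, I would define a partial function $f$ by setting $f(n)=\varphi_n(n)+1$ whenever $\varphi_n(n)$ converges, and leaving $f(n)$ undefined otherwise.

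The first step is to check that $f$ is itself partial computable; this is immediate from universality of the enumeration, since on input $n$ one simulates the universal machine on the pair $(n,n)$ and, upon its halting with output $v$, returns $v+1$. No unbounded search beyond the one inherent in the simulation is needed, so $f$ is a legitimate partial computable function.

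The second step is the diagonal conclusion. Suppose for contradiction that some total computable $g$ extends $f$. Then $g=\varphi_e$ for some index $e$, and totality of $g$ forces $\varphi_e(e)$ to converge. By construction $f(e)=\varphi_e(e)+1$, and since $g$ extends $f$ one has $g(e)=f(e)$, giving $\varphi_e(e)=\varphi_e(e)+1$, a contradiction.

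Finally, since the applications in this paper use costs in $[0,1]$, I would transport the construction to that setting either by post-composing with a computable injection $\mathbb{N}\hookrightarrow[0,1]$ such as $k\mapsto 2^{-(k+1)}$, or by diagonalizing directly using two effectively separated constructive reals (say $0$ and $1$): on input $n$, run $\varphi_n(n)$ until an approximation lands strictly on one side of $1/2$, then output the opposite endpoint. The point requiring the most care is not a technical obstacle but a conceptual one: the witness $f$ is inherently non-total, and the content of the theorem is precisely that this partiality cannot be filled in by any total computable function.
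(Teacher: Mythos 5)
Your diagonal argument is correct, and it is the canonical proof of this result; note that the paper itself does not prove the statement at all but imports it by citation from Shen and Vereshchagin, where the argument is the same diagonalization you give ($f(n)=\varphi_n(n)+1$ on the domain $\{n:\varphi_n(n)\text{ converges}\}$, followed by the observation that any total computable extension $g=\varphi_e$ forces $\varphi_e(e)=\varphi_e(e)+1$). The only place needing caution is your first suggested transport to $[0,1]$: post-composing with the injection $k\mapsto 2^{-(k+1)}$ is delicate, because a hypothetical total computable extension now takes constructive-real values accumulating at $0$, and you cannot effectively recover an integer-valued extension from it (equality of constructive reals, and membership in the range of the injection, are not decidable), so the contradiction does not transfer verbatim and your extension would no longer live in the enumeration $\varphi_0,\varphi_1,\ldots$ you diagonalized against. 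Your second variant is the right move: with only two possible values separated by a fixed rational gap, any total extension into the constructive reals can be approximated to within $1/4$ and thresholded at $1/2$, yielding a genuine total computable $\{0,1\}$-valued extension and hence the contradiction; this two-valued form is also exactly what the paper's application uses, since the partial algorithm $H$ there outputs only $0$ or $1$ and the costs $C_n$, $D_n$ are built from that dichotomy.
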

    \noindent The fact that there are partially defined inextendable algorithms is crucial in the proofs.

\section{Definitions}\label{defi}
    Before proving \ref{theorem} and \ref{remark}, we define one key term and two sequences. In this paper, all traveling salesman problems of consideration have toll fees that are constructive real numbers, which we define as the following:
    \begin{definition}[\textbf{Constructive Real Numbers}]\label{CRNs}
    Given two programs $\alpha$ and $\beta$. $\alpha$ generates a sequence of rational numbers $\alpha(n), n\in\mathbb{N}$. $\beta$, the regulator program, generates a sequence of natural numbers $\beta(m)$. The sequence of rational numbers $\alpha(n)$ converges to a constructive real number such that 
    \begin{equation}
        \forall m, i,j>\beta(m), \ \ |\alpha(i) - \alpha(j)| \ \leq \ 2^{-m}.
    \end{equation}
    \end{definition}
   
    In addition, we define two sequences that will be utilized in the remaining sections.
    \begin{definition}[\textbf{Sequence $C$}]
    Given a partially defined computable algorithm $H$ that takes in an input $n$, we define sequence $C$ in the following manner: 
    \[ C_{n,k} = \begin{cases} 
      1 & \text{If by step $k$, $H$ didn't finish working on input $n$ or already gave $1$.} \\
      1-2^{-m} & \text{If by step $k$, $H$ finished working on input $n$ and gave $0$.}\\
   \end{cases}
    \]
    Here $m$ is the step number when $0$ was printed and $m,n,k\in \mathbb{N}$.
    \end{definition}
    \begin{definition}[\textbf{Sequence $D$}]
    In a similar manner, we define a new sequence, $D$:
    \[ D_{n,k} = \begin{cases} 
      1 & \text{If by step $k$, $H$ didn't finish working on input $n$ or already gave $0$.} \\
      1-2^{-m} & \text{If by step $k$, $H$ finished working on input $n$ and gave $1$.}\\
   \end{cases}\]
    Same as in the previous definition, $m$ is the step number when $1$ was printed and $m,n,k\in \mathbb{N}$.
    \end{definition}
    For each fixed $n$, $C_{n,k}$ and $D_{n,k}$ are constructive numbers, which we denote by $C_n,D_n$. In either of the subsequent cases, $C_n$ and $D_n$ are constructive real numbers:
    \begin{enumerate}
        \item If program prints $0$ for $C$ or $1$ for $D$, terms are always equal to $1-2^{-m}$ after or by step $m$, which according to \ref{CRNs}, the corresponding $C_n$ and $D_n$ are constructive real numbers with the standard regulator. 
        \item If program prints $1$ for $C$ or $0$ for $D$ or does not terminate, then terms in the sequence are always equal to $1$, by \ref{CRNs}, they are constructive real numbers with the standard regulator. 
    \end{enumerate}
     
\section{The Case of General TSPs}\label{asy}
    In this section, we present a proof of the nonexistence of an algorithm that always decides the optimal solution to general traveling salesman problems by constructing a particular contradiction. 
    
    To begin, we consider the simplistic instance of $3$ nodes and $6$ one-way roads with numbers representing the tolls. $C_{n}$ and $D_{n}$ indicated in the graph are constructive real number(s) generated by the partially defined algorithm $H$ as defined in \ref{defi}.
    \begin{center}
    \begin{figure}[H]
    \begin{tikzpicture}[scale=1.2]
    \node[shape=circle,draw=black] (A) at (7.5,-1) {$n_1$};
    \node[shape=circle,draw=black] (B) at (11,-1) {$n_2$};
    \node[shape=circle,draw=black] (C) at (9.25,2.5) {$n_3$};
    
    \path [->,color=blue] (A) edge node {$C_{n}$} (B);
    \path [->,color=blue] (B) edge node {1} (C);
    \path [->,color=blue] (C) edge node {1} (A);
    \path [->,bend left=50,color=red] (C) edge node {$D_{n}$} (B);
    \path [->,bend left=50,color=red] (B) edge node {1} (A);
    \path [->,bend left=50,color=red] (A) edge node {1} (C);
    \end{tikzpicture} 
    
    \caption{TSP with $3$ Nodes}
    \label{asyGraph}
    \end{figure}
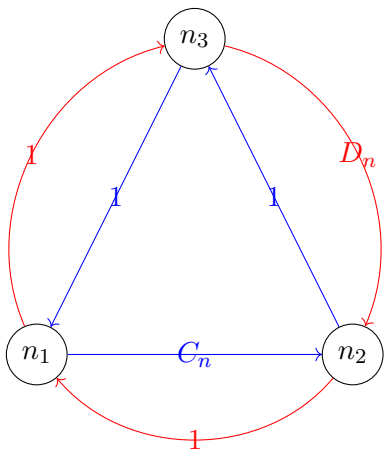

    \end{center}
    In this case, there are two complete tours, $n_1\rightarrow n_2 \rightarrow n_3 \rightarrow n_1$ in blue and $n_1\rightarrow n_3 \rightarrow n_2 \rightarrow n_1$ in red. 
    
    Let an algorithm $\hat{H}$ be the algorithm that solves all traveling salesman problems by deciding the optimal route with a minimum cost. We have a partially defined computable algorithm $H$ that generates sequence $C$ and $D$. There is an extension of algorithm $H$, $H'$, that applies to the aforementioned instance by determining $\min(2+C_n,2+D_n)$, which further reduces to computing $\min(C_n,D_n)$.
    
    To explain the process of determining the optimal route by $H'$, we shall enumerate a few possible $C$ and $D$ and the corresponding outputs of $H'$. We use $Z$ to denote the case when the program does not terminate. 
    \renewcommand{\arraystretch}{1.5}
    \begin{center}
    \begin{table}[H]
    \begin{tabular}{c c c c c c c c c c c c c c c c c}
    $C_n \backslash k$ & & & & & $\text{output of} \atop H$ & & & & $D_n \backslash k$ & & & & & $\text{output of} \atop H$ &  \\ 
    & $\frac{1}{2}$ & $\frac{1}{2}$ & $\frac{1}{2}$ & $\cdots$ & (0) & & & & & $\frac{1}{2}$ & $\frac{1}{2}$ & $\frac{1}{2}$ & $\cdots$ & \ \ \ (1) &\\  
    & 1 & $\frac{1}{4}$ & $\frac{1}{4}$ & $\cdots$ & (0) & & & & & 1 & $\frac{1}{4}$ & $\frac{1}{4}$ & $\cdots$ &  \ \ \ (1) &\\
    & 1 & 1 & $\frac{1}{8}$ & $\cdots$ & (0) & & & & & 1 & 1 & $\frac{1}{8}$ & $\cdots$ &  \ \ \ (1) & \\
    & & & & & $\vdots$ & & & & & & & & &  \ \ \ \ $\vdots$ & \\
     & 1 & 1 & 1 & $\cdots$ & \text{(Z or 1)} & & & & & 1 & 1 & 1 & $\cdots$ & \text{(Z or 0)} & \\
    &&&&&&&&&&&&&&\\
    \end{tabular}
    \caption{Enumerations of possible sequence $C$ and $D$.}
    \label{table}
    \end{table}
    \end{center}
    As we shall observe from the table that each corresponding line of some $C_n$ and some $D_n$ satisfies the definition of constructive real numbers \ref{CRNs}, and are possible tolls of some particular roads. Notice that when $H$ outputs $1$, $D_n\leq C_n$, $H'$ decides that the red route is the optimal solution, and when $H$ outputs $0$, $C_n\leq D_n$, we would otherwise have the blue route as the optimal solution. When $C_n=D_n$, both routes are optimal, which means that the previous statement would also be correct.
    
    We proceed to prove \ref{theorem} by contradiction. 
    \begin{theorem}
    There does not exist an everywhere defined computable algorithm $\hat{H}$ which determines the optimal route in all traveling salesman problems when costs of roads are constructive real numbers.
    \end{theorem}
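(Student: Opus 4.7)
The plan is to derive the theorem by contradiction, using the 3-node instance of Figure \ref{asyGraph} together with Theorem \ref{SV} as a reduction gadget. Suppose such a total computable $\hat{H}$ exists. Let $H$ be the partial computable $\{0,1\}$-valued function promised by Theorem \ref{SV}, which admits no total computable extension. For each $n\in\mathbb{N}$, I would form the instance of Figure \ref{asyGraph} with the edge $n_1\to n_2$ carrying the toll $C_n$ and the edge $n_3\to n_2$ carrying the toll $D_n$, with the four remaining edges carrying toll $1$. The two complete tours then have costs $2+C_n$ (blue) and $2+D_n$ (red), so the optimal tour is determined entirely by the comparison of $C_n$ and $D_n$. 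As observed in Section \ref{defi}, $C_n$ and $D_n$ are legitimate constructive real numbers with the standard regulator, and the programs producing their approximations and moduli of convergence are obtained uniformly from $n$, so the entire family of instances is a uniformly computable input for $\hat{H}$.

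Next I would apply $\hat{H}$ to this family to manufacture a total computable function $\tilde{H}:\mathbb{N}\to\{0,1\}$ by setting $\tilde{H}(n)=0$ when $\hat{H}$ returns the blue tour on the $n$-th instance and $\tilde{H}(n)=1$ when it returns the red tour. Totality and computability of $\tilde{H}$ follow immediately from the assumed properties of $\hat{H}$. The heart of the argument is to verify that $\tilde{H}$ extends $H$. If $H(n)$ halts with output $0$ at some step $m$, the definitions force $C_n=1-2^{-m}<1=D_n$, so the blue tour is strictly cheaper and $\hat{H}$ must return blue, giving $\tilde{H}(n)=0=H(n)$. Symmetrically, if $H(n)=1$ then $D_n<C_n$, the red tour is uniquely optimal, and $\tilde{H}(n)=1=H(n)$. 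Thus $\tilde{H}$ is a total computable extension of $H$, contradicting Theorem \ref{SV}.

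The case in which $H(n)$ never halts is harmless: then $C_n=D_n=1$, both tours have cost $3$, and $\hat{H}$ may return either; whichever one it picks defines $\tilde{H}(n)$, but there is no value of $H(n)$ to disagree with. The main subtlety to watch for, and the step I expect to require the most care, is establishing the uniformity of the reduction: one must check that for each $n$ the pair $(C_n,D_n)$ is presented to $\hat{H}$ as a genuine constructive-real description (approximating sequence plus computable regulator) produced by a single program depending computably on $n$, rather than smuggling in an oracle query about $H$. Once this uniformity is in place, the case analysis above is routine and the contradiction with Theorem \ref{SV} closes the proof.
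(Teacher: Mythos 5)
Your proposal is correct and takes essentially the same route as the paper's own proof: the same $3$-node gadget of Figure \ref{asyGraph} with tolls $C_n$ and $D_n$, the reduction of the route decision to the comparison of $C_n$ and $D_n$, and the same contradiction with Theorem \ref{SV} by manufacturing a total computable extension of the inextendable partial function $H$. If anything, your write-up is more careful than the paper's terse argument, since you explicitly verify the extension property in each halting case, handle the divergent case, and flag the uniformity in $n$ of the constructive-real descriptions (with the standard regulator), all of which the paper leaves implicit.
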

    \begin{proof}
    Assume there exists an algorithm $\hat{H}$ that always finds the optimal routes in traveling salesman problems. Let algorithm $H$ be a partially defined computable algorithm that generates sequence $C$ and $D$ mentioned in \ref{defi}. Then there exists an extension of $H$, $H'$, that determines either $C_n$ or $D_n$ is smaller, and decides the optimal route in the previously mentioned traveling salesman problem. The extension poses a contradiction to our knowledge about $H$ being an inextendable algorithm (see \ref{SV}). Therefore, we have proven \ref{theorem}. 
    \end{proof}
    
    The idea of this proof could be utilized to prove \ref{remark}, which we present in the next section.

\section{The Case of Symmetric TSPs}
    In section \ref{asy}, we proved \ref{theorem} by constructing a contradiction with an asymmetric traveling salesman problem. In this section, we attempt to limit the range of traveling salesman problems down to the symmetric cases, and we prove the nonexistence of an algorithm that finds the optimal solution in all symmetric traveling salesman problems (with unoriented roads). 
    
    Similarly, we begin by constructing a particular traveling salesman problem: 
    \begin{centering}
    \begin{figure}[H]
    \begin{tikzpicture}[scale = 0.9]
    \node[shape=circle,draw=black] (A) at (6.25,-0.2) {$n_2$};
    \node[shape=circle,draw=black] (B) at (12.25,-0.2) {$n_5$};
    \node[shape=circle,draw=black] (C) at (9.25,2.5) {$n_1$};
    \node[shape=circle,draw=black] (D) at (7.5,-4) {$n_3$};
    \node[shape=circle,draw=black] (E) at (11,-4) {$n_4$};
    
    \draw [<->] (A) -- (B) node[midway]{1};
    \draw [<->] (A) -- (C) node[midway]{1};
    \draw [thick, <->, dotted] (A) -- (D) node[midway]{100};
    \draw [thick, <->, dotted] (A) -- (E) node[midway]{100};
    \draw [thick, <->, dotted] (B) -- (C) node[midway]{100};
    \draw [<->] (B) -- (D) node[midway]{1};
    \draw [<->] (B) -- (E) node[midway]{1};
    \draw [<->] (C) -- (D) node[midway]{$C_n$};
    \draw [<->] (C) -- (E) node[midway]{$D_n$};
    \draw [<->] (D) -- (E) node[midway]{1};
    \end{tikzpicture}
    
    \caption{TSP with $5$ Nodes}
    \label{symGraph}
    \end{figure}
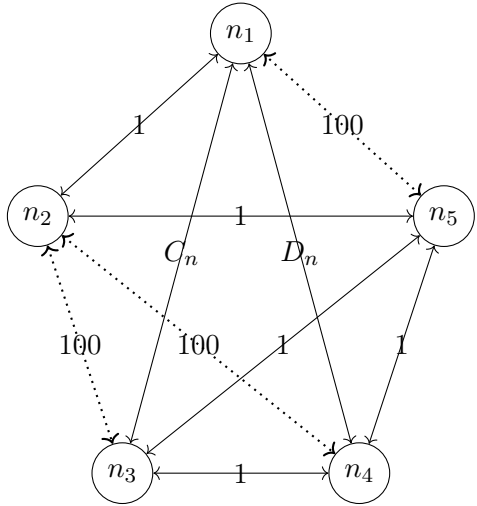
    \end{centering}
    The dotted paths in the diagram \ref{symGraph} have costs of $100$, which are economically inefficient. Therefore, complete tours (tours that connect $5$ nodes and return to the initial position) with at least one road costing $100$ could be ignored in the selection of the optimal route in the course of the proof. In the hypothetical situation of machine decision, the algorithm $H'$ would also eliminate those cases as it recognizes them as economically inefficient when making a comparison between tours marked in the following diagram. The binary search would culminate in comparing the tours below, which are the only candidates of the optimal solution in the aforementioned TSP.
    \begin{figure}[H]
            \begin{center}
    \begin{tikzpicture}[scale=0.8]
    \node[shape=circle,draw=black] (A) at (6.25,-0.2) {$n_2$};
    \node[shape=circle,draw=black] (B) at (12.25,-0.2) {$n_5$};
    \node[shape=circle,draw=black] (C) at (9.25,2.5) {$n_1$};
    \node[shape=circle,draw=black] (D) at (7.5,-4) {$n_3$};
    \node[shape=circle,draw=black] (E) at (11,-4) {$n_4$};
    
    \draw [thick, <->, color = blue] (A) -- (B) node[midway]{1};
    \draw [thick, <->, color = blue] (A) -- (C) node[midway]{1};
    \draw [thick, <->, dotted] (A) -- (D) node[midway]{100};
    \draw [thick, <->, dotted] (A) -- (E) node[midway]{100};
    \draw [thick, <->, dotted] (B) -- (C) node[midway]{100};
    \draw [<->] (B) -- (D) node[midway]{1};
    \draw [thick, <->, color = blue] (B) -- (E) node[midway]{1};
    \draw [thick, <->, color = blue] (C) -- (D) node[midway]{$C_n$};
    \draw [<->] (C) -- (E) node[midway]{$D_n$};
    \draw [thick, <->, color = blue] (D) -- (E) node[midway]{1};
    \end{tikzpicture}
    \hspace{1cm}
    \begin{tikzpicture}[scale=0.8]
    \node[shape=circle,draw=black] (A) at (6.25,-0.2) {$n_2$};
    \node[shape=circle,draw=black] (B) at (12.25,-0.2) {$n_5$};
    \node[shape=circle,draw=black] (C) at (9.25,2.5) {$n_1$};
    \node[shape=circle,draw=black] (D) at (7.5,-4) {$n_3$};
    \node[shape=circle,draw=black] (E) at (11,-4) {$n_4$};
    
    \draw [thick, <->, color = red] (A) -- (B) node[midway]{1};
    \draw [thick, <->, color = red] (A) -- (C) node[midway]{1};
    \draw [thick, <->, dotted] (A) -- (D) node[midway]{100};
    \draw [thick, <->, dotted] (A) -- (E) node[midway]{100};
    \draw [thick, <->, dotted] (B) -- (C) node[midway]{100};
    \draw [thick, <->, color = red] (B) -- (D) node[midway]{1};
    \draw [<->] (B) -- (E) node[midway]{1};
    \draw [<->] (C) -- (D) node[midway]{$C_n$};
    \draw [thick, <->, color = red] (C) -- (E) node[midway]{$D_n$};
    \draw [thick, <->, color = red] (D) -- (E) node[midway]{1};
    \end{tikzpicture}
    \end{center}
    
    \caption{The Only $2$ Possibilities}
    \label{symGraph}
    \end{figure}
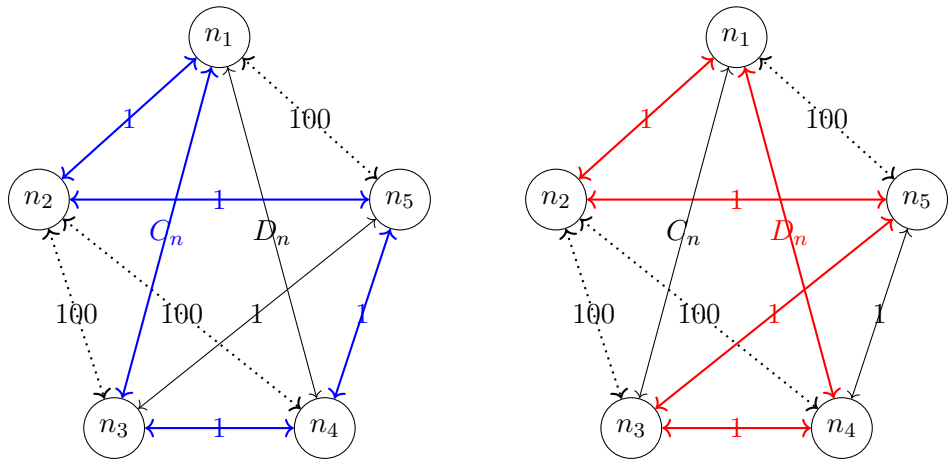
    As indicated in the graphs, the tour in blue, $n_1\rightarrow n_2 \rightarrow n_5 \rightarrow n_4 \rightarrow n_3 \rightarrow n_1$, costs the traveler $4+C_n$; the tour in red, $n_1 \rightarrow n_2 \rightarrow n_5 \rightarrow n_3 \rightarrow n_4 \rightarrow n_1$\footnote{The aforementioned tours are unorientated, $n_1 \rightarrow n_3 \rightarrow n_4 \rightarrow n_5 \rightarrow n_2 \rightarrow n_1$ for example, is considered the same as $n_1\rightarrow n_2 \rightarrow n_5 \rightarrow n_4 \rightarrow n_3 \rightarrow n_1$.}, costs the traveler $4+D_n$. $C_n$ and $D_n$ represent some constructive numbers between $[0,1]$. 
    
    We then follow the idea in the previous section to prove the following: 
    \begin{remark}
    \ref{theorem} holds true for symmetric TSPs.
    \end{remark}
    \begin{proof}
    Assume there is an algorithm $\hat{H}$ that always finds the optimal routes in symmetric traveling salesman problems. Let a partially defined computable algorithm $H$ be the algorithm that generates sequence $C$ and $D$. We could find an extension of $H$, $H'$, that applies specifically to the previously mentioned symmetric traveling salesman problem with $5$ nodes. To determine the optimal route, the algorithm compares $4+C_n$ and $4+D_n$, which reduces to $C_n$ and $D_n$. With the same explanation in \ref{asy}, we would conclude that such an algorithm exists. However, since algorithm $H$ is inextendable (see \ref{SV}), the constructed $H'$ poses a contradiction. Therefore, $H$ does not exist and we proved \ref{remark}.
    \end{proof}
    
\section{Conclusion and Future Work}
    In this paper, we proved the nonexistence of a universal algorithm that determines the optimal complete tour in all traveling salesman problems (both symmetric and asymmetric) in a constructive mathematical setup and further in all symmetric traveling salesman problems. 
    
    For future research, we wish to explore the possibility of creating an infinite series of examples as in \ref{asyGraph} and \ref{symGraph}. Additionally, we propose that gaining topological alternatives to understand and construct examples might be a plausible direction.  



\begin{thebibliography}{2}
\bibitem[BB]{BB} Errett Bishop and Douglas Bridges, \emph{Constructive Analysis}. (1985). Springer-Verlag. 

\bibitem[BGG]{BGG} Egon Börger, Erich Grädel, and Yuri Gurevich, \emph{The Classical Decision Problem}.

\bibitem[SV]{SV}A. Shen and N. K. Vereshchagin, \emph{Computable Functions}. (2002). American Mathematical Society.
\end{thebibliography}
\end{document}